%% file: main.tex
\documentclass[11pt]{article}
\usepackage[T1]{fontenc}
\usepackage{lmodern}
\usepackage[margin=1in]{geometry}
\usepackage{microtype}
\usepackage{amsmath,amssymb,amsthm,mathtools}
\usepackage{xcolor}
\input{macros}

\usepackage{tikz}
\usetikzlibrary{arrows.meta,calc,positioning}
\renewcommand{\OPT}{\text{\normalfont\scshape Opt}}
\renewcommand{\ALG}{\text{\normalfont\scshape Alg}}
\definecolor{ink}{HTML}{263940}
\definecolor{teal}{HTML}{237D87}
\definecolor{ochre}{HTML}{B87628}
\definecolor{pathcolor}{HTML}{7253A0}
\tikzset{
  vertex/.style={circle,draw=ink,fill=white,inner sep=0pt,minimum size=4.5pt},
  subtreevertex/.style={vertex,draw=teal,fill=teal},
  boundaryvertex/.style={rectangle,draw=ochre,fill=white,inner sep=0pt,minimum size=5.5pt},
  treeedge/.style={draw=ink!55,line width=.55pt},
  patharrow/.style={draw=pathcolor,line width=1.2pt,-{Stealth[length=4pt]}}
}
\title{An $\tilde \Omega(\log n \log m)$ Information-Theoretic
Lower Bound for Randomized Online Set Cover}
\author{Roie Levin\thanks{Department of Computer Science, Rutgers University, New Jersey. E-mail: \texttt{roie.levin@rutgers.edu}.}}
\date{}
\begin{document}

\maketitle

\begin{abstract}
	We show an information-theoretic lower bound of $\Omega\left(\frac{\log n \log m}{\log \log n + \log \log m}\right)$ for online set cover against randomized algorithms, for all sufficiently large $m$ and $n$ satisfying $\log^2 n \leq m \leq 2^n$.
\end{abstract}
\input{intro.tex}
\input{overview.tex}
\input{inner.tex}
\input{amplify.tex}

\input{general_parameters.tex}

\section{AI Disclosure}

This result was obtained via extensive interaction with both ChatGPT 6 Astra and Claude Fable 5.1. The author prompted both models by describing the inner/outer instance framework and setting as the goal to find an inner instance with parameters $n = 2^{\Theta(\sqrt{m})}$ and competitive ratio $\Omega(\sqrt{m})$. Astra discovered the inner instance essentially independently. The author takes full responsibility for any mathematical errors.

\clearpage
{\footnotesize
\bibliography{refs}
\bibliographystyle{alpha}
}
\end{document}

%% file: macros.tex
\usepackage{typearea}
\usepackage{setspace}

\usepackage{fullpage}

\usepackage{comment} 
\usepackage{bbm}
\usepackage{framed} 
\usepackage{url} 
\usepackage{complexity}
\usepackage{booktabs}
\usepackage{amsmath,amssymb}
\usepackage{float}

\usepackage{amsthm}
\usepackage{thmtools} 
\usepackage{thm-restate}
\usepackage{nicefrac}
\usepackage{calc}
\usepackage{enumerate}
\usepackage{enumitem}

\usepackage[ruled,vlined,linesnumbered]{algorithm2e}
\usepackage{forloop}

\usepackage{graphicx}
\usepackage[font=footnotesize,labelfont=bf]{subcaption}
\usepackage[font=footnotesize,labelfont=bf]{caption}
\usepackage[nobreak=true]{mdframed}
\usepackage{appendix}
\usepackage[noend]{algpseudocode}
\usepackage[colorinlistoftodos]{todonotes}
\usepackage{xr}
\usepackage{array}
\usepackage{xspace}
\definecolor{ForestGreen}{rgb}{0.1333,0.5451,0.1333}
\definecolor{DarkRed}{rgb}{0.8,0,0}
\definecolor{Red}{rgb}{1,0,0}
\usepackage[linktocpage=true,
pagebackref=true,colorlinks,
linkcolor=DarkRed,citecolor=ForestGreen,
bookmarks,bookmarksopen,bookmarksnumbered]{hyperref}

\usepackage[capitalise]{cleveref}
\crefrangelabelformat{enumi}{#3#1#4--#5#2#6}
\usepackage{parskip}
\usepackage{chngcntr}
\usepackage{mathtools,stackengine}
\usepackage{multirow}

\stackMath
\newcommand{\stackGeq}[1]{%
	\setbox0=\hbox{${}\mathrel{\stackon[-1pt]{\geq}{\scriptstyle\text{#1\strut}}}{}$}
	\xdef\tmpwd{\dimexpr\the\wd0\relax}
	\kern.5\tmpwd\mathclap{\box0}&\kern.5\tmpwd
}

\usepackage{nameref}
\usepackage{tikz}
\usetikzlibrary{patterns}

\allowdisplaybreaks

\DeclareMathOperator*{\expectation}{\mathbb{E}}
\let\poly\relax
\DeclareMathOperator*{\poly}{poly}

\newcommand\eps{\epsilon}

\newcommand\abs[1]{\lvert #1 \rvert}
\newcommand{\expect}{\expectation\expectarg}
\DeclarePairedDelimiterX{\expectarg}[1]{[}{]}{%
	\ifnum\currentgrouptype=16 \else\begingroup\fi
	\activatebar#1
	\ifnum\currentgrouptype=16 \else\endgroup\fi
}

\DeclarePairedDelimiterX{\nicesetarg}[1]{\{}{\}}{%
	\ifnum\currentgrouptype=16 \else\begingroup\fi
	\activatebar#1
	\ifnum\currentgrouptype=16 \else\endgroup\fi
}

\newcommand{\innermid}{\nonscript\;\delimsize\vert\nonscript\;}
\newcommand{\activatebar}{%
	\begingroup\lccode`\~=`\|
	\lowercase{\endgroup\let~}\innermid 
	\mathcode`|=\string"8000
}

\newcommand\Desc{\textsc{Desc}\xspace}
\newcommand\OPT{\textsc{Opt}\xspace}

\newcommand\ALG{\textsc{Alg}\xspace}

\counterwithin{equation}{section}

\usepackage{eqparbox}

\theoremstyle{plain}

\newtheorem{theorem}{Theorem}[section]

\newtheorem{lemma}[theorem]{Lemma}

\newtheorem{observation}[theorem]{Observation}

\newlength{\continueindent}
\usepackage{etoolbox}
\makeatletter
\newcommand*{\ALG@customparshape}{\parshape 2 \leftmargin \linewidth \dimexpr\ALG@tlm+\continueindent\relax \dimexpr\linewidth+\leftmargin-\ALG@tlm-\continueindent\relax}
\apptocmd{\ALG@beginblock}{\ALG@customparshape}{}{\errmessage{failed to patch}}
\makeatother

\makeatletter
\def\thm@space@setup{%
	\thm@preskip=\parskip \thm@postskip=0pt
}
\makeatother

\usepackage{etoolbox}
\usepackage{tikz}
\usetikzlibrary{tikzmark}
\usetikzlibrary{calc}

\errorcontextlines\maxdimen

\newcommand{\ALGtikzmarkcolor}{black}% customise this, if you want
\newcommand{\ALGtikzmarkextraindent}{4pt}% customise this, if you want
\newcommand{\ALGtikzmarkverticaloffsetstart}{-.5ex}% customise this, if you want
\newcommand{\ALGtikzmarkverticaloffsetend}{-.5ex}% customise this, if you want
\makeatletter
\newcounter{ALG@tikzmark@tempcnta}

\newcommand\ALG@tikzmark@start{%
	\global\let\ALG@tikzmark@last\ALG@tikzmark@starttext%
	\expandafter\edef\csname ALG@tikzmark@\theALG@nested\endcsname{\theALG@tikzmark@tempcnta}%
	\tikzmark{ALG@tikzmark@start@\csname ALG@tikzmark@\theALG@nested\endcsname}%
	\addtocounter{ALG@tikzmark@tempcnta}{1}%
}

\def\ALG@tikzmark@starttext{start}
\newcommand\ALG@tikzmark@end{%
	\ifx\ALG@tikzmark@last\ALG@tikzmark@starttext
	\else
	\tikzmark{ALG@tikzmark@end@\csname ALG@tikzmark@\theALG@nested\endcsname}%
	\tikz[overlay,remember picture] \draw[\ALGtikzmarkcolor] let \p{S}=($(pic cs:ALG@tikzmark@start@\csname ALG@tikzmark@\theALG@nested\endcsname)+(\ALGtikzmarkextraindent,\ALGtikzmarkverticaloffsetstart)$), \p{E}=($(pic cs:ALG@tikzmark@end@\csname ALG@tikzmark@\theALG@nested\endcsname)+(\ALGtikzmarkextraindent,\ALGtikzmarkverticaloffsetend)$) in (\x{S},\y{S})--(\x{S},\y{E});%
	\fi
	\gdef\ALG@tikzmark@last{end}%
}

\apptocmd{\ALG@beginblock}{\ALG@tikzmark@start}{}{\errmessage{failed to patch}}
\pretocmd{\ALG@endblock}{\ALG@tikzmark@end}{}{\errmessage{failed to patch}}
\makeatother
\algblock[with]{With}{EndWith}
\algblockdefx[With]{With}{EndWith}%
[1]{\textbf{with} #1 \textbf{do}}%
{}

\makeatletter
\ifthenelse{\equal{\ALG@noend}{t}}%
{\algtext*{EndWith}}
{}%
\makeatother

%% file: intro.tex
\section{Introduction}

In the Set Cover problem we are given a set system $(U,\mathcal{S})$ (where $U$ is a ground set of size $n$ and $\mathcal{S}$ is a collection of subsets with $|\mathcal{S}| = m$). The goal is to select a minimum size subcollection $\mathcal{S}' \subseteq \mathcal{S}$ such
that the union of the sets in $\mathcal{S}'$ is $U$. Many
algorithms have been discovered for this problem that achieve
an approximation ratio of $1+\ln n$ (see
e.g. \cite{chvatal1979greedy,johnson1974approximation,lovasz1975ratio,williamson2011design}),
and the leading $\ln n$ term is best possible unless $\mathsf{P} = \NP$
\cite{feige1998threshold,Dinur:2014:AAP:2591796.2591884}.

In online Set Cover, the algorithm does not know $U$ initially, nor the contents of each $S \in \mathcal{S}$. An oblivious adversary fixes the set system and arrival order in advance, then reveals each element and the sets containing it, and the algorithm must immediately cover each incoming element using previously selected sets or by irrevocably selecting additional sets. The goal is to pick the smallest number of unique sets over the course of this process. The seminal work of \cite{alon2003online} showed a competitive ratio of $O(\log m \log n)$ for this version.\footnote{This paper considers the harder \emph{unknown-instance model} for online set cover where $n$ means the number of elements requested online (see Chapter 1 of \cite{korman2004use}). The result of \cite{alon2003online} was presented for the \emph{known-instance model} where $n$ means the size of the universe from which requests can be drawn, but extends to the unknown setting as well. This was made explicit in subsequent work \cite{buchbinder2009online}.}

In his Master's thesis \cite{korman2004use}, Simon Korman showed that every polynomial time randomized algorithm requires competitive ratio $\Omega(\log n \log m)$ in the $m= \poly(n)$ regime unless $\NP \subseteq \BPP$. We show that the same bound holds unconditionally up to $\log \log n$ terms even for unbounded computation time algorithms, and even for a much wider regime of $m$ and $n$.

\begin{theorem}\label{thm:general-parameters}
	There is a universal constant $C>0$ such that, for every sufficiently
	large integer $n$, every integer $m$ with
	$(\log n)^2\le m\le2^n$, and every randomized algorithm $\ALG$,
	there is an online set cover instance $\mathcal I$ with $n$ elements
	and $m$ sets such that
	\[
	\frac{\expect*{\abs{\ALG(\mathcal I)}}}{\OPT(\mathcal I)}
	\ge C\frac{\log m\,\log n}{\log\log m+\log\log n}.
	\]
\end{theorem}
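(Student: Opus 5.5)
We plan to follow the inner/outer instance framework described in the AI disclosure. The first step is to build an \emph{inner instance}: a distribution over online set cover instances on a small number $k$ of sets. Its ground set has size about $N=2^{\Theta(\sqrt{k})}$ and $\OPT=1$ on every instance in its support. Yet every (even unbounded-computation) randomized algorithm pays $\Omega(\sqrt{k})$ in expectation, or more generally $\Omega(\log N)$. By Yao's principle it suffices to handle deterministic algorithms against this distribution.

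The inner instance we intend to use is adaptive along a random path. Elements are revealed in phases, and each phase's element is contained in a shrinking family of "surviving" candidate sets. The candidate families are arranged along a hidden root-to-leaf path in a tree, one candidate being the secret single set covering everything. Information-theoretically, the algorithm cannot tell which branch is correct until it is too late. In each phase it must either buy a set that, with constant probability, falls off the hidden path, or defer and be charged later. The key inner lemma is a potential/entropy argument: each phase reveals only $O(1)$ bits about the hidden path, while forcing an expected $\Omega(1)$ purchase of a soon-to-be-useless set. The tree depth is $\Theta(\sqrt{k})$ and the branching is chosen so that the number of elements is $2^{\Theta(\sqrt{k})}$.

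The second step is \emph{amplification} by composition. The outer instance is a hard instance for the classical $\Omega(\log n)$-type online lower bound, for example the standard halving construction over $\log$ levels. Each outer "set" is replaced by a copy of the inner instance, and each outer element is replaced by an inner element sequence; this is a product or composition of set systems. One then shows that the costs multiply: $\OPT$ of the composite equals $\OPT_{\mathrm{outer}}\cdot\OPT_{\mathrm{inner}}$. Every algorithm, restricted to each independent inner block, must pay the inner ratio in expectation, and it must also pay the outer ratio in number of blocks. We make this rigorous by an averaging argument: the inner blocks use independent randomness, so conditioning on the outer history leaves each inner block a fresh hard instance.

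The third step is parameter tuning to reach the general regime $(\log n)^2\le m\le 2^n$. Take the inner block size $k\approx(\log m/\log\log m)^2$ or so, balanced so that the inner ratio contributes a $\log m$-type factor and the outer depth contributes a $\log n$-type factor. The $\log\log$ losses come from rounding, union bounds over levels, and needing each block to contribute $\Theta(\log(\text{block size}))$. Padding with dummy sets or elements handles the extremes of the range.

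We expect the main obstacle to be the inner instance itself: proving an information-theoretic $\Omega(\log N)$ expected cost with only $\mathrm{polylog}(N)$ sets against adaptive randomized algorithms. The delicate point is showing that purchasing many sets early cannot shortcut the search. We would first try a martingale bound on the posterior probability that the algorithm already holds the hidden set, showing it increases by at most $O(1/\sqrt{k})$ per unit of expected cost.
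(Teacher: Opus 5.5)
\textbf{There is a genuine gap: the inner instance you posit cannot exist.} Suppose $\OPT=1$ and there are $k$ sets. Consider the randomized algorithm that, whenever an uncovered element arrives, buys a uniformly random set among those containing every element revealed so far. Against an oblivious adversary, each set $S$ has a fixed ``death time'' (the first request it misses). The candidate pool at time $t$ is exactly $\{S:\text{death}(S)>t\}$, and it always contains the optimal set. A uniform pick eliminates every candidate dying no later than itself, so this is a records process with expected cost at most $H_k\le 1+\ln k$. Hence no distribution with $\OPT=1$ on $k$ sets can force $\Omega(\sqrt k)$: your inner lemma is false, not merely delicate.

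The same argument sinks the whole plan. With the outer halving instance ($\OPT=1$) and your claimed multiplicativity of $\OPT$, the composite also has $\OPT=1$, so its randomized ratio is $O(\log m)$, below the target.

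The missing idea is that the optimum must grow with the instance. In the paper the inner instance has $\OPT\le 2k$:
\begin{itemize}
\item A complete binary tree of height $2k$ is colored at random with $512k^2$ colors, and the colors are the sets.
\item Batch $B_i$ consists of all subtrees of size $<k$ rooted at the $i$-th vertex $v_i$ of a hidden random path. Each such subtree is covered by the colors on its boundary.
\item Covering $B_i$ forces the algorithm to own the colors of a connected subtree of size $\ge k$ below $v_i$. In expectation, $(k+1)/2$ of those vertices lie outside the subtree of $v_{i+1}$.
\item Summing over $i\le k$ gives $k(k+1)/2$ distinct owned vertices. The coloring lemma (every root subtree $H$ of size $\le k^2$ has $\ge|H|/2$ colors) turns these into $\Omega(k^2)$ distinct sets, while $\OPT$ pays only the $2k$ path colors.
\end{itemize}

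\textbf{Your composition step also does not work as stated.}
\begin{itemize}
\item If all blocks share one inner instance, the algorithm learns it in the first block. After that it buys only inner-optimal indices, so the ratios add rather than multiply.
\item If the blocks are independent but a composite set $(S,T)$ uses the same inner index $T$ in every block, then $\OPT$ must pay for the union of the blocks' inner optima. That union grows with the number of blocks.
\end{itemize}
The paper's amplification resolves this tension. Each final set is a tuple $(S_1,\ldots,S_q,\ell)$ with a separate inner index for every level of a root-to-leaf path of an outer binary tree. Then $\OPT$ stays at $L$; it is not multiplied. A set relevant in round $i-1$ remains relevant in round $i$ only with probability $1/2$, which gives $\mathbb{E}|\mathcal A_i|\le\frac12\mathbb{E}|\mathcal A_{i-1}|+\mathbb{E}|Z_i|$ and hence $\mathbb{E}|\ALG|\ge q\beta/2$.

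\textbf{Your parameter roles are reversed.}
\begin{itemize}
\item The inner instance's element count is exponential in its ratio and must fit in $n$, so it supplies the $\log n$ factor. The paper takes $k\approx\log n/32$.
\item The outer depth costs sets exponentially but elements only linearly, so it supplies the $\log m/\log\log n$ factor. The paper takes $q\approx\log m/(10+2\log k)$, with a separate case $k=2$ when $m>2^{\sqrt n}$.
\end{itemize}
Your choice $k\approx(\log m/\log\log m)^2$ would need $2^{\Theta(\log m/\log\log m)}$ inner elements, which exceeds $n$ when $m$ is near $2^n$.
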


This settles (up to $\log \log$ terms) the randomized competitive ratio of the online set cover problem. This improves upon the recent result of \cite{doron2026tight} showing a $\log m \log^{1/2 - \eps} n$ lower bound, which was for a specific setting of $m$ and $n$.

%% file: overview.tex
\section{Overview of the Construction}

At the heart of the construction is an inner instance $\mathcal{I}_k$ governed by a parameter $k$
with $m^{\text{in}}=\Theta(k^2)$ sets and $n^{\text{in}}=2^{\Theta(k)}$ elements. We show that every randomized algorithm has a
competitive ratio of $\Omega(k)=\Omega(\sqrt {m^{\text{in}}})=\Omega(\log n^{\text{in}})$ on such instances. We describe this construction in \cref{sec:in}.

From here we perform the following amplification. We place an independent copy of the inner instance $\mathcal{I}_k(v)$ at each vertex $v$ of a complete binary tree $T^{\text{out}}$ of height $q$ which we call the outer tree.  Elements of the final set system are the union of the elements of all these copies, specified as $(e^{\text{in}}, v^{\text{out}})$ where $e^{\text{in}} \in [n^\text{in}]$ is an element in the inner instance and $v^{\text{out}} \in V(T^{\text{out}})$ is a vertex in the outer tree. Sets in the final system are specified as $(S^{\text{in}}_1, \ldots, S^{\text{in}}_q, \ell^{\text{out}})$, where each $S^{\text{in}}_i \in [m^\text{in}]$ is a set from the inner instance and $\ell^{\text{out}} \in \text{Leaves}(T^{\text{out}})$ is a leaf of the outer tree. In other words, each set is defined by a choice of root to leaf path in the outer tree, and a choice of set for every instance along that path.\footnote{This construction resembles the product construction from \cite{korman2004use}, with some important differences. Firstly, in order to control the instance parameters, \cite{korman2004use} requires that each set in the final system be specified by a root to leaf path and a \emph{single} set from the inner instance, which we interpret as the $q$ copies of the single set along the path. Secondly, \cite{korman2004use} uses as the inner instance hard to approximate set systems obtained via the PCP theorem; this is the source of their assumption that the algorithm run in polynomial time.}

The adversary picks a random root to leaf path $v^{\text{out}}_1, \ldots, v^{\text{out}}_q$ and reveals in order $\mathcal{I}_k({v^{\text{out}}_1}), \ldots,  \mathcal{I}_k({v^{\text{out}}_q})$. We show that every randomized algorithm will incur a competitive ratio of $\Omega(qk)$ on this final sequence. Intuitively, this is because the algorithm guesses the incorrect descendent of every vertex in the outer instance half of the time, and within each inner instance it loses an approximation factor of $k$.

It is instructive to understand parameters in the $m = \poly(n)$ regime. Since the overall set system has $n^{\text{fin}} = qn^{\text{in}} = q 2^{\Theta(k)}$ elements and $m^{\text{fin}} = 2^{q-1} (m^{\text{in}})^q$, setting $q = \Theta(k / \log k)$ yields $n^{\text{fin}} = 2^{\Theta(k)}$ and $m^{\text{fin}}=2^{\Theta(k)}$ and approximation ratio $\Omega(k^2 / \log k) = \Omega(\log^2 n^{\text{fin}} / \log \log n^{\text{fin}})$.

All logarithms in this paper are base 2.

%% file: inner.tex
\section{The Inner Instance}

\label{sec:in}

Our goal in this section is to show the following lemma.

\begin{lemma}
	\label{lem:inner}
	For every integer $k>1$, there is a distribution $\mathcal{I}_k$ over online set cover instances with $m^{\text{in}} = 512 k^2$ and $n^{\text{in}} \leq k4^k$ such that $|\OPT| \leq 2k$ with probability $1$, and any deterministic algorithm $\ALG$ has  $\expect{\abs{\ALG}} \geq k(k+1)/4$. 
\end{lemma}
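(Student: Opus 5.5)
The plan is to apply Yao's principle. I will describe a random instance $\mathcal{I}_k$ and show that every deterministic $\ALG$ pays at least $\sum_{i=1}^k i/2 = k(k+1)/4$ in expectation, while $\OPT$ pays at most $2$ per phase. The target bound suggests the structure. The instance runs in $k$ phases. In phase $i$ the adversary issues $i$ requests one after another. Each request is fresh in the sense that, conditioned on everything $\ALG$ has seen so far, it is left uncovered by $\ALG$'s current sets with probability at least $1/2$, and whenever that happens $\ALG$ must buy a new set. Linearity of expectation over the $\binom{k+1}{2}$ requests then gives $\expect{\abs{\ALG}}\ge k(k+1)/4$. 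The budget of $2k$ for $\OPT$ forces every phase to be coverable by just $2$ sets chosen in hindsight, and these two sets must cover all $i$ requests of that phase.

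\textbf{Construction.} I would drive the randomness by hidden uniform bits, $O(k)$ in total, at most two per request slot. Elements are indexed by a phase $i\le k$ together with a leaf of a complete binary tree of depth $2k$. That gives at most $k\cdot 4^k$ elements, matching $n^{\text{in}}\le k4^k$, and the tree picture is suggested by the figure macros. A request in phase $i$ is the leaf reached by following the hidden bits revealed so far plus one new bit. Sets are indexed by triples (phase $j$, slot $t\le 2k$, label from a constant-size alphabet such as a pair of bits), padded by a constant factor. This keeps the family of size at most $512k^2$; the large constant is slack for the case analysis. A set $(j,t,b)$ contains exactly those elements of phase $j$ whose root-to-leaf path agrees with label $b$ at depth $t$ and branches off in a prescribed direction after it. The design goal is that after phase $i$ is complete, the two sets labelled by the final hidden bits of that phase (one for each half of the subtree) together cover all $i$ requests. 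That gives $\abs{\OPT}\le 2k$ with probability $1$.

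\textbf{Lower bound.} Fix a deterministic $\ALG$ and a request $r$ at slot $t$ of phase $i$. Let $\mathcal{F}$ be the history before $r$'s new bit is drawn. $\ALG$'s current sets are a function of $\mathcal{F}$ alone. I then need a covering lemma: for each value of the new bit, the element $r$ lies in a different set among those indexed $(i,t,\cdot)$, and no set outside slot $t$ of phase $i$ contains $r$. Given this, $r$ is already covered with probability at most $1/2$. So each request costs $\ALG$ at least $1/2$ in expectation, and summing over the $i$ slots of phase $i$ and then over $i$ finishes the argument.

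\textbf{Main obstacle.} The difficulty is reconciling the two requirements. $\OPT$ needs few sets, each covering many requests of a phase. The freshness argument needs each new request to avoid all sets $\ALG$ has already bought, with probability $1/2$. If $\OPT$'s two sets cover all $i$ requests, then $\ALG$ could in principle buy those same sets early. So the $\OPT$ sets must be identifiable only in hindsight: their indices must depend on bits revealed only at the end of the phase. Meanwhile every prefix of requests must still be covered by many candidate sets, so that $\ALG$ is never forced into something $\OPT$ can exploit. This is where the depth-$2k$ nesting comes in: a set is defined by the branching point of the path, so a set that is consistent with past requests is a coin flip for the next one. My first attempt would be to have $\OPT$'s sets be the two "left/right-branch-at-the-last-level" sets, and to verify the covering lemma by induction on the slot. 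If that fails, I would enlarge the alphabet, which the constant $512$ leaves room for.
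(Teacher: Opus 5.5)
Your proposal has a genuine gap. It never actually pins down a construction, and the two properties you require of it cannot hold together. Your covering lemma says a request at slot $t$ of phase $i$ lies only in sets indexed $(i,t,\cdot)$, and your accounting places the $i$ requests of phase $i$ in $i$ distinct slots. Then distinct requests of a phase have disjoint families of covering sets. So every cover, $\OPT$'s included, needs at least $i$ sets on phase $i$; the ``two sets labelled by the final hidden bits'' cover at most two requests. This gives $|\OPT|\ge\sum_{i=1}^k i=k(k+1)/2$, which violates $|\OPT|\le 2k$ once $k\ge4$. And since $\ALG$ can simply buy all $512k^2$ sets, the ratio would be $O(1)$.

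If you drop slot separation, then a set bought for an earlier request can contain $r$, and the conditional probability $1/2$ is no longer justified. Your own definition of $(j,t,b)$, via agreement of the leaf's path at depth $t$, already drops it, since a leaf agrees with some label at every depth. For instance, the requests of a phase follow a common revealed prefix of hidden bits, so they tend to share shallow-slot sets. $\ALG$ can buy one such set once the prefix is revealed and cover the rest of the phase for free. This is exactly the tension you flag in your last paragraph. Enlarging the alphabet does not touch it, because the problem is how sets are shared across requests, not how many labels there are.

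The missing idea is to give up per-request freshness altogether.
\begin{itemize}
\item \emph{Elements and batches.} An element is a rooted subtree $H(v)$ with $|H|<k$ of a complete binary tree of height $2k$. It lies in the sets whose colors appear on $\partial(H(v))$. Batch $B_i$ reveals all such subtrees rooted at $v_i$, the $i$-th vertex of a uniformly random root-to-leaf path.
\item \emph{The bound on $\OPT$.} $\OPT$ buys the $2k$ colors on the path, and each $\chi(v_j)$ serves every batch $i\le j$. This is how a few sets cover many requests across time while being identifiable only in hindsight.
\item \emph{The bound on $\ALG$.} Covering $B_i$ forces the marked vertices (those whose color is bought) to contain a connected subtree $K_i$ of size $k$ rooted at $v_i$. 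Otherwise the marked component at $v_i$ reaches no leaf and has an unmarked boundary, so its own element is uncovered. Since $v_{i+1}$ is a uniform child of $v_i$, $\mathbb{E}\,|K_i\setminus\Desc(i+1)|=(k+1)/2$. These pieces are disjoint, so the root subtree $\bigcup_{i\le k}K_i$ has expected size at least $k(k+1)/2$ and size at most $k^2$.
\item \emph{From vertices to sets.} A random coloring with $512k^2$ colors (a union bound over at most $4^s$ rooted subtrees of each size $s$) ensures every such subtree has at least half as many distinct colors as vertices. This yields $k(k+1)/4$.
\end{itemize}
So the bound comes from $k$ batches each wasting about $(k+1)/2$ vertices, not from $\sum_i i/2$ independent coin flips.
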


\subsection{The Construction}

The instance is specified by a complete binary tree $T = (V,E)$ of height $2k$ and a coloring of the vertices $\chi: V \rightarrow [m^{\text{in}}]$. The sets of the set system $\mathcal{S}$ correspond to colors, and each element $e_{H(v)}$ of the set system is specified by a subtree $H(v) \subseteq T$ rooted at a vertex $v$. For convenience, we use the notation $\chi(H(v)):= \{ \chi(u) \mid u \in H(v)\}$. 

We define set-element adjacency as follows. Set $S \in [m^{\text{in}}]$ contains element $e_{H(v)}$ if there is a vertex $u$ on the boundary $\partial(H(v))$ such that $\chi(u) = S$, where
\[\partial(H(v)) := \{v \notin H \mid \text{parent}(v) \in H\} \cup (H \cap \text{leaves(T)}), \qquad \partial(\emptyset(v)) := \{v\}.\]
(Note the special casing of the empty subtree rooted at $v$.)

\input{inner_fig.tex}

Finally, given a fixed tree $T$ and coloring $\chi$, we can describe the online input sequence. The adversary picks a random root to leaf path $\pi = (v_1, v_2, \ldots, v_{2k})$ and reveals in order from $i = 1,\ldots, 2k$, the batch of elements 
\[B_i := \{e_H \mid \text{root}(H) = v_i, |H| < k\}.\] 
(Within each batch, the adversary reveals elements in any canonical lexicographical order.)

Note that $n^{\text{in}}\le 2k\sum_{s=0}^{k-1}4^s<k4^k$. We will set $m^{\text{in}} = 512k^2$.

\subsection{The Analysis}

Bounding the cost of $\OPT$ is straightforward.

\begin{observation}
	The optimal solution satisfies $|\OPT| \leq 2k$ with probability $1$.
\end{observation}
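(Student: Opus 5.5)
The plan is to exhibit one explicit feasible cover of size at most $2k$ for every realization of the random path $\pi=(v_1,\dots,v_{2k})$. The natural candidate is the collection of colors along the path,
\[
\mathcal{S}_\pi := \{\chi(v_1),\chi(v_2),\dots,\chi(v_{2k})\},
\]
which has at most $2k$ distinct sets regardless of the coloring $\chi$. It then suffices to show that every revealed element $e_H\in B_i$ contains some set of $\mathcal{S}_\pi$. By the definition of set-element adjacency, this means showing that $\partial(H)$ contains a vertex $v_j$ of the path with $j\ge i$.

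Fix $i$ and an element $e_H\in B_i$, so $\mathrm{root}(H)=v_i$. There are two cases.

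\emph{Empty subtree.} If $H=\emptyset(v_i)$, then by the special convention $\partial(H)=\{v_i\}$. The set $\chi(v_i)$ therefore contains $e_H$.

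\emph{Nonempty subtree.} Otherwise $v_i\in H$, and we walk down the path $v_i,v_{i+1},\dots,v_{2k}$. Since $H$ is a subtree rooted at $v_i$, this walk starts inside $H$. Let $j>i$ be the first index with $v_j\notin H$, if one exists. Then $\mathrm{parent}(v_j)=v_{j-1}\in H$, so $v_j$ lies in the first part of $\partial(H)$. If no such $j$ exists, then every path vertex from $v_i$ on lies in $H$, including the leaf $v_{2k}$. In that case $v_{2k}\in H\cap\mathrm{leaves}(T)\subseteq\partial(H)$.

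In both cases some $\chi(v_j)\in\mathcal{S}_\pi$ contains $e_H$. So $\mathcal{S}_\pi$ covers every revealed element, and $|\OPT|\le|\mathcal{S}_\pi|\le 2k$ with probability $1$.

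Two remarks on the argument:
\begin{itemize}
\item The size restriction $|H|<k$ is not needed here; it matters only for the lower bound on $\ALG$.
\item The only point needing care is the indexing convention. The path must have exactly $2k$ vertices, from the root to a leaf, so that $v_{2k}$ is indeed a leaf of $T$. This is what the terminal case of the walk relies on.
\end{itemize}
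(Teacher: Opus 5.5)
Your proposal is correct and follows the paper's proof essentially verbatim. You buy the path colors $\{\chi(v_1),\ldots,\chi(v_{2k})\}$, handle the empty subtree via the convention $\partial(\emptyset(v_i))=\{v_i\}$, and otherwise take as the boundary witness either the first path vertex outside $H(v_i)$ or the terminal leaf $v_{2k}\in H(v_i)$.
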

\begin{proof}
	Buy the collection $\{\chi(v_1),\ldots,\chi(v_{2k})\}$.
	Consider any element $e_{H(v_i)}\in B_i$. If $H(v_i)$ is empty,
	then $v_i\in\partial(H(v_i))$. Otherwise, follow the path from
	$v_i$. If it leaves $H(v_i)$, its first vertex outside has its
	parent in $H(v_i)$, and hence belongs to $\partial(H(v_i))$.
	If it never leaves, the terminal leaf $v_{2k}$ lies in $H(v_i)$
	and therefore also belongs to $\partial(H(v_i))$.
\end{proof}

In what remains, we lower bound the cost of the algorithm.

\begin{lemma}\label{lem:fixed-coloring}
	Fix the coloring $\chi: V \rightarrow [m]$ for a tree of height $2k$. Then every deterministic algorithm satisfies
	\[
	\expect{\abs{\ALG}}\ge\frac{k(k+1)}{2}\, \min_{\substack{H(v_1)\subseteq T\\1\le |H(v_1)|\le k^2}}
	\frac{|\chi(H(v_1))|}{|H(v_1)|},
	\]
	where the minimum is over nonempty subtrees rooted at the global
	root $v_1$, and the expectation is over the random choice of root to leaf path $\pi$.
\end{lemma}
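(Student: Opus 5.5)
The plan is to extract, from the sets the algorithm has bought, a random nonempty subtree $H$ rooted at $v_1$ with $|H|\le k^2$, all of whose vertices have colors that were bought, and with $\expect{|H|}\ge k(k+1)/2$. Write $\rho$ for the minimum in the statement. Every such $H$ satisfies $|\ALG|\ge|\chi(H)|\ge\rho|H|$, so taking expectations gives the lemma. Throughout, let $A_i$ be the set of colors bought by the end of batch $B_i$. Since $\ALG$ is deterministic, $A_i$ is a function of $v_1,\dots,v_i$ only. Also $A_i\subseteq A_{2k}$.

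\textbf{Step 1: a large bought region below each $v_i$.} Call $u$ \emph{bought at time $i$} if $\chi(u)\in A_i$. Let $K_i$ be the maximal connected subtree rooted at $v_i$ consisting of vertices bought at time $i$. Since $e_{\emptyset(v_i)}\in B_i$ must be covered, $v_i$ itself is bought, so $K_i\neq\emptyset$. I claim that for $i\le k$ we have $|K_i|\ge k$. Suppose instead $|K_i|<k$. Then $e_{K_i}\in B_i$, so some vertex of $\partial(K_i)$ is bought at time $i$.
\begin{itemize}
\item A non-leaf boundary vertex is a child, outside $K_i$, of a vertex of $K_i$. If it were bought, it would contradict the maximality of $K_i$.
\item If $K_i$ contains a leaf, then $K_i$ contains a path from $v_i$ to that leaf. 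This path has at least $2k-i+1\ge k+1$ vertices, because $v_i$ is at depth $i-1$ and $i\le k$. That contradicts $|K_i|<k$.
\end{itemize}
So $|K_i|\ge k$. Let $K_i'\subseteq K_i$ be a canonically chosen connected subtree rooted at $v_i$ with exactly $k$ vertices. Then $K_i'$ is determined by $v_1,\dots,v_i$.

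\textbf{Step 2: gluing and expected disjoint growth.} Set $H:=\bigcup_{i=1}^k K_i'$. Each $K_i'$ is rooted at $v_i$ and contains $v_i$, and consecutive $v_i$ are parent and child. Hence $H$ is a connected subtree rooted at $v_1$, it is nonempty, and $|H|\le k\cdot k=k^2$. All its colors lie in $A_{2k}$.

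For the size bound, let $P_i:=K_i'\setminus T(v_{i+1})$, where $T(w)$ denotes the full subtree under $w$. Then $P_i\subseteq T(v_i)\setminus T(v_{i+1})$. These differences are pairwise disjoint along the path, so $|H|\ge\sum_{i=1}^k|P_i|$.

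Conditioned on $v_1,\dots,v_i$, the set $K_i'$ is fixed and $v_{i+1}$ is a uniformly random child of $v_i$. The $k-1$ non-root vertices of $K_i'$ split between the two child subtrees. Hence $\expect{|K_i'\cap T(v_{i+1})|}\le (k-1)/2$ and $\expect{|P_i|}\ge k-(k-1)/2=(k+1)/2$. (For $i=k$ the vertex $v_{k+1}$ exists since the height is $2k$.) Summing over $i=1,\dots,k$ gives $\expect{|H|}\ge k(k+1)/2$.

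\textbf{Main obstacle.} The delicate step is Step 1. There I must check that the boundary definition, including the leaf clause and the empty-subtree special case, really forces $|K_i|\ge k$, and that restricting to the first $k$ batches rules out the leaf escape. The independence used in Step 2 is also essential: $K_i'$ must be fixed before $v_{i+1}$ is revealed. This holds because the elements of $B_i$ depend only on $v_i$, and $A_i$ is computed before any element of $B_{i+1}$ arrives.
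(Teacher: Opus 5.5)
Your proposal is correct and follows the paper's proof essentially step for step. In both, the maximal bought component below $v_i$ has at least $k$ vertices when $i\le k$; it is trimmed to a size-$k$ subtree rooted at $v_i$; the union is taken over $i\le k$; and the uniformly random choice of $v_{i+1}$ gives each piece $(k+1)/2$ expected vertices outside $T(v_{i+1})$. Your requirement that the size-$k$ subtree be chosen canonically, so that it is determined by $v_1,\ldots,v_i$, makes explicit a point the paper leaves implicit in its ``arbitrary'' choice.
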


\begin{proof}
	 Let $\ALG(i)$ be the collection of sets bought by the algorithm immediately after batch $B_i$, and let $\Desc(i)$ be the descendents of $v_i$ including $v_i$. 
	We will lower bound $|\ALG(k)|$, i.e. the cost after only the first $k$ batches, which suffices to prove the lemma.
	
	For analysis, conditioned on $v_1, \ldots, v_i$, we define a subtree $K_i$ as follows. Define the marked vertices $\textsc{Mark}(i) = \{v \in V(T) \mid \chi(v) \in \ALG(i)\}$ to be those whose colors have been bought. Since $e_{\emptyset(v_i)} \in B_i$ forces $v_i$ to be marked, there is a nonempty connected component $K'_i$ of $\textsc{Mark}(i) \cap \Desc(i)$. The component $K'_i$ must have size at least $k$, otherwise $K'_i$ can contain no leaves (because $v_i$ is at depth at most $k$), which would mean that $\partial(K'_i)$ is unmarked, which in turn would mean that $e_{K'_i} \in B_i$ is uncovered by $\ALG(i)$. Let $K_i \subseteq K'_i$ be an arbitrary connected subtree of size exactly $k$ rooted at $v_i$. 
	
	Given the choices of $K_i$, define $K := \bigcup_{i=1}^k K_i.$
	Observe that:
	\begin{enumerate}
		\item $|K| \leq k^2$,
		\item $K$ contains the path $(v_1, \ldots, v_k)$,
		\item $\chi(K) \subseteq \ALG(k)$ (because each $K_i \subseteq \textsc{Mark}(i)$).
	\end{enumerate}
	
	Since $v_{i+1}$ is chosen uniformly at random from the two children of $v_i$, we have that
	\[
	\expect{\abs{K_i\setminus\Desc(i+1)}\mid v_1,\ldots,v_i}
	=1+\frac{k-1}{2}=\frac{k+1}{2}.
	\]
	Consequently, since $K_i\setminus\Desc(i+1)$ are pairwise disjoint subsets of $K$,
	\[
	\expect{\abs{K}}
	\ge \sum_{i=1}^k\expect{\abs{K_i\setminus\Desc(i+1)}}
	=\frac{k(k+1)}{2}.
	\]
	Since $1\le |K|\le k^2$ and
	$\chi(K)\subseteq\ALG(k)$, we obtain
	\[
	\expect{\abs{\ALG(k)}}
	\ge \expect{\abs{\chi(K)}}
	\ge \expect{\abs{K}}\,
	\min_{\substack{H(v_1)\subseteq T\\1\le |H(v_1)|\le k^2}}
	\frac{|\chi(H(v_1))|}{|H(v_1)|}.  \qedhere
	\]
\end{proof}

To prove \cref{lem:inner}, it suffices to show that for every sufficiently large $k$, there exists a coloring of a complete binary tree of height $2k$ such that every nonempty subtree $H(v_1)$ of size at most $k^2$ rooted at $v_1$ has at least $\Omega(|H(v_1)|)$ many unique colors.

\begin{lemma}\label{lem:color}
	For every positive integer $k$, a complete binary tree $T = (V,E)$ of height $2k$ can be colored with $m=512k^2$ colors so that every subtree $H$ rooted at $v_1$
	with $|H|\le k^2$, satisfies $|\chi(H)|\ge |H|/2$, where  $\chi(H)=\{\chi(v):v\in H\}$.
\end{lemma}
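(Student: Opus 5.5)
The plan is the probabilistic method: color every vertex of $T$ independently and uniformly at random from $[m]$ with $m = 512k^2$. Then take a union bound over all subtrees $H$ rooted at $v_1$ with $|H| \le k^2$, and show that with positive probability none of them is \emph{bad}, meaning $|\chi(H)| < |H|/2$. The argument never uses the height $2k$ of the tree, except that subtrees of size at most $k^2$ must fit inside it. That only shrinks the family we union-bound over.

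\textbf{Counting subtrees.} I would first note that a subtree rooted at $v_1$ with $s$ vertices in a binary tree is determined by a binary-tree shape on $s$ nodes, in which each child is marked as left or right. So the number of such subtrees is at most the Catalan number $C_s \le 4^s$. Sizes $s=1$ need no argument, since one vertex has $1 \ge 1/2$ colors.

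\textbf{Bounding one bad event.} Fix $H$ with $|H| = s$, where $2 \le s \le k^2$. If $H$ is bad, its $s$ colors all lie in some set $A \subseteq [m]$ with $|A| = t := \lfloor s/2 \rfloor$. Taking a union bound over all such $A$ gives
\[
\Pr[H \text{ bad}] \le \binom{m}{t}\left(\frac{t}{m}\right)^{s} \le \left(\frac{em}{t}\right)^{t}\left(\frac{t}{m}\right)^{s} = e^{t}\left(\frac{t}{m}\right)^{s-t} \le \left(\frac{e\,s}{2m}\right)^{s/2}\cdot O(1).
\]
In the last step I use $s - t \ge s/2$ and $t/m \le s/(2m) < 1$. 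Some care is needed with the floor; one can instead bound the number of colors by $s/2$ and absorb constants. Since $s \le k^2$ and $m = 512k^2$, we get $e s/(2m) \le e/1024$.

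\textbf{Summing up.} The total failure probability is then at most
\[
\sum_{s\ge 2} 4^s \left(\frac{e}{1024}\right)^{s/2} = \sum_{s \ge 2}\left(\frac{16e}{1024}\right)^{s/2}.
\]
This is a geometric series with ratio $\sqrt{16e/1024} < 0.21$, so its sum is well below $1$. Hence some coloring is good for every such $H$, which proves the lemma.

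The main obstacle is purely quantitative. The constant $512$ must beat the $4^s$ count of subtrees against the $(s/m)^{s/2}$ collision probability, including the floor and the $e^{t}$ factor. I would verify the $s=2,3$ terms by hand, where floors matter most. For example, at $s=2$ the bad event is that two vertices share a color, which has probability $1/m$, and it is multiplied by at most $2$ subtrees. The large constant leaves ample slack for these small cases.
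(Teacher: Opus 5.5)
Your proof is correct. Its skeleton is the same as the paper's: a uniformly random coloring, a union bound over at most $C_s\le 4^s$ root-containing subtrees of each size $s$, and a per-subtree failure probability that decays geometrically in $s$.

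The difference is in how you bound a single bad event.
\begin{itemize}
\item The paper exposes the colors of $H$ one at a time. Each repeats an earlier color with conditional probability at most $s/m\le 1/512$, so the number of repetitions is stochastically dominated by $\operatorname{Bin}(s,1/512)$, and a Chernoff bound gives $(e/256)^{s/2}\le 8^{-s}$.
\item You instead union-bound over palettes of size $t=\lfloor s/2\rfloor$. This needs only $\binom{m}{t}\le (em/t)^t$, skips the coupling and Chernoff step entirely, and gives a stronger bound: $(e/1024)^{s/2}$ versus the paper's $(e/256)^{s/2}$.
\end{itemize}
The paper's route is a reusable template (bounded conditional collision probability plus a black-box tail bound). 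Yours is more elementary and self-contained.

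Two small remarks on your write-up.
\begin{itemize}
\item Your hedges about the floor and an $O(1)$ factor are unnecessary. Since $t\le s/2$, $s-t\ge s/2$ and $t/m<1$, you get directly $e^t(t/m)^{s-t}\le (es/(2m))^{s/2}$ with no extra constant. The total failure probability is then at most $\sum_{s\ge 2}(e/64)^{s/2}<0.06$.
\item Your $s=2$ example misidentifies the bad event. Two vertices sharing a color gives $|\chi(H)|=1=|H|/2$, which is allowed, so the bad event is actually empty at $s=2$. This is harmless: for even $s$ your palette bound covers the larger event $|\chi(H)|\le s/2$, so it is only an overestimate.
\end{itemize}
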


\begin{proof}
	Color every vertex independently and uniformly from $[m]$.
	Fix a subtree $H$ with $1\le |H|=s\le k^2$. Exposing its colors
	in a fixed order, each color repeats an earlier color with
	conditional probability at most $s/m\le1/512$. Hence the number
	$R=s-|\chi(H)|$ of repetitions is stochastically dominated by
	$X\sim\operatorname{Bin}(s,1/512)$.
	Chernoff bounds\footnote{For binomial $X$ and $a>\expect{X}$,
		$\Pr[X>a]\le(e\expect{X}/a)^a$; see \cite[Theorem~4.1, p.~68]{motwani1995randomized}.}
	give
	\[
	\Pr[|\chi(H)|<s/2]
	=\Pr[R>s/2]
	\le\Pr[X>s/2]
	\le(e/256)^{s/2}\le8^{-s}.
	\]
	There are at most $C_s \leq 4^s$ subtrees of size $s$ rooted at $v_1$, where $C_s$ is the $s$-th Catalan number. Summing over these subtrees,
	the probability of any failure is at most
	\[
	\sum_{s=1}^{k^2}4^s8^{-s}
	=\sum_{s=1}^{k^2}2^{-s}=1-2^{-k^2}<1.
	\]
	Hence there exists at least one coloring that satisfies the desired inequality simultaneously for every subtree $H$.
\end{proof}

\cref{lem:inner} follows by combining \cref{lem:fixed-coloring} with \cref{lem:color}. 

%% file: inner_fig.tex
\begin{figure}[!ht]
\centering
\begin{tikzpicture}[x=1cm,y=1cm,font=\small,
  color vertex/.style={draw=ink!60,rounded corners=1pt,
    minimum width=4.6mm,minimum height=4.6mm,inner sep=0pt,
    line width=.45pt},
  boundary color/.style={color vertex,double=white,
    double distance=1.1pt,line width=.55pt}]

  % H(v) includes both children of v and one child of its left child.
  \draw[draw=ink!45,rounded corners=5pt,
    pattern=north east lines,pattern color=ink!16]
    (-2.45,-.64)--(-1.77,-.64)--(-.02,-1.91)--(-.02,-2.61)
    --(-1.74,-2.61)--(-2.52,-3.86)--(-3.27,-3.86)
    --(-4.13,-2.57)--(-4.13,-1.98)--cycle;
  \node at (-2.10,-2.35) {$H(v)$};

  % The global root and the other branch of T give the local context.
  \node[color vertex,fill=ochre] (r) at (0,0) {};
  \node[above=2mm of r] {root of $T$};
  \node[color vertex,fill=pathcolor] (s) at (1.80,-1.05) {};
  \node[color vertex,fill=teal] (sl) at (1.05,-2.25) {};
  \node[color vertex,fill=ochre] (sr) at (2.55,-2.25) {};

  % The four vertices of H(v).
  \node[color vertex,fill=ochre] (v) at (-2.10,-1.05) {};
  \node[right=2mm of v] {$v$};
  \node[color vertex,fill=teal] (a) at (-3.70,-2.25) {};
  \node[color vertex,fill=pathcolor] (b) at (-.50,-2.25) {};
  \node[color vertex,fill=ochre] (c) at (-2.90,-3.45) {};

  % The boundary has five vertices, at two different depths.
  \node[boundary color,fill=pathcolor] (al) at (-4.50,-3.45) {};
  \node[boundary color,fill=teal] (bl) at (-1.30,-3.45) {};
  \node[boundary color,fill=pathcolor] (br) at (.30,-3.45) {};
  \node[boundary color,fill=teal] (cl) at (-3.30,-4.65) {};
  \node[boundary color,fill=pathcolor] (cr) at (-2.50,-4.65) {};

  \foreach \u/\w in {r/v,r/s,s/sl,s/sr,v/a,v/b,a/al,a/c,b/bl,b/br,c/cl,c/cr}
    \draw[treeedge] (\u.south)--(\w.north);

  % Each displayed branch continues in the complete binary tree.
  \foreach \u in {sl,sr,al,bl,br,cl,cr} {
    \draw[treeedge] (\u.south)--++(-.20,-.30);
    \draw[treeedge] (\u.south)--++(.20,-.30);
    \node[text=ink!55] at ($(\u.south)+(0,-.60)$) {$\vdots$};
  }
\end{tikzpicture}
\caption{Illustration of a subtree $H(v)$ rooted at a vertex $v$. The element $e_{H(v)}$ corresponding to $H(v)$
belongs exactly to the purple and teal sets because vertices of these colors appear on $\partial(H(v))$.}
\label{fig:inner-instance}
\end{figure}

%% file: amplify.tex
\section{Binary Tree Amplification}\label{sec:tree-product}

We use $n^{\text{in}},m^{\text{in}}$ for the inner numbers of elements and sets, and
$n^{\text{fin}},m^{\text{fin}}$ for the final numbers of elements and
sets. The outer tree $T^{\text{out}}$ has $q$ levels, including the
root, so each root to leaf path has $q$ vertices.

\subsection{The Construction}

\begin{lemma}\label{lem:tree-product}
	Suppose there is a distribution $\mathcal I$ over online set cover instances with $m$ sets and $n$ elements such that
	$|\OPT|\le L$ and any deterministic algorithm $\ALG$ has expected cost at least $\beta$ on the inner instance. Then there is a distribution  $\mathcal I^\text{fin}$ over instances with exactly $n^{\text{fin}} = qn$ elements and $m^{\text{fin}} = 2^{q-1}m^q$
	 sets such that $|\OPT|\le L$ and $\expect{\abs{\ALG}}\ge q\beta/2$.
\end{lemma}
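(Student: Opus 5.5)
We follow the construction sketched in the overview. Let $T^{\text{out}}$ be a complete binary tree with $q$ levels. At each vertex $u$ we place an independent sample $\mathcal I(u)$ of the inner distribution, with element set $[n]\times\{u\}$. A final set is a tuple $(S_1,\dots,S_q,\ell)$, where $S_j\in[m]$ and $\ell$ is a leaf. This tuple contains element $(e,u)$ exactly when $u$ lies on the root-to-$\ell$ path, say at depth $j$, and $e\in S_j$ in the instance $\mathcal I(u)$.

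The adversary picks a uniformly random root-to-leaf path $u_1,\dots,u_q$ and reveals the elements of $\mathcal I(u_1)$, then those of $\mathcal I(u_2)$, and so on, each in its own online order. Only the elements of the $q$ path copies are requested. To get exactly $n^{\text{fin}}=qn$ elements we count only these $q$ revealed copies; this is fine in the unknown-instance model, where $n$ counts requested elements. The number of sets is $2^{q-1}m^q$, as claimed.

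\textbf{The bound on \OPT.} For each $j$, fix an optimal cover $\OPT_j$ of $\mathcal I(u_j)$ with $|\OPT_j|\le L$, and pad each $\OPT_j$ to exactly $L$ sets $S^{(j)}_1,\dots,S^{(j)}_L$. Then the $L$ tuples $(S^{(1)}_t,\dots,S^{(q)}_t,u_q)$ for $t\in[L]$ cover every revealed element, since each lies on the leaf-$u_q$ path. So $|\OPT|\le L$.

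\textbf{The bound on \ALG.} Fix a deterministic $\ALG$; by Yao's principle it suffices to lower bound its expected cost. Look at stage $j$, when $\mathcal I(u_j)$ is revealed. The random variables $u_{j+1},\dots$ are not yet determined and are independent of everything seen so far. Let $A_j$ be the set of final sets that $\ALG$ buys during stage $j$. Project these onto inner sets by mapping $(S_1,\dots,S_q,\ell)$ to $S_j$ whenever $\ell$ lies below $u_j$. The inner sets obtained by projecting everything bought so far form a valid online cover of $\mathcal I(u_j)$.

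Conditioned on the history, this projected cover is a deterministic online algorithm for $\mathcal I(u_j)$, which is a fresh sample of the inner distribution. Final sets bought in earlier stages still project, but only those whose leaf lies below $u_j$ can help. So the inner cost is at most $|A_j|$ plus $|P_j|$, where $P_j$ is the set of earlier-bought final sets whose leaf lies below $u_j$. By the hypothesis,
\[
\expect{|A_j|+|P_j|}\ge\beta .
\]

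\textbf{Charging earlier purchases (the main obstacle).} A set bought in stage $i<j$ has a fixed leaf, chosen before $u_{i+1}$ was drawn. It lies below $u_{i+1}$ with probability $1/2$, since $u_{i+1}$ is a uniformly random child of $u_i$, independent of the set. We charge each purchase to the deepest stage whose copy it can help, with that depth random. The total reuse $\sum_j|P_j|$ counts each set once per later stage on its path, which would overcount. To avoid this, we account per level using the dichotomy between the two children of $u_i$.

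Define $Y_i$ as the inner cost incurred on level $i$, i.e.\ the number of distinct inner sets $S_i$ projected from the sets $\ALG$ ever uses to cover $\mathcal I(u_i)$. Then $\expect{Y_i}\ge\beta$, and $|\ALG|$ dominates $\sum_i Y_i$ up to the sharing issue: one final set can serve many levels. A final set serves level $i+1$ only if it survived the coin flip at $u_i$, and with probability $1/2$ it did not. It is cleanest to symmetrize per level. For each $i<q$, let $u_i'$ be the sibling of $u_{i+1}$. By symmetry, the sets available at level $i$ that are "used at $i$ and point towards $u_{i+1}$'s side" have the same expected count as those pointing towards $u_i'$'s side. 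Hence at least half of the inner cost on level $i$, in expectation, comes from final sets that will be useless for all deeper levels. These level-wise portions are disjoint, so
\[
\expect{|\ALG|}\ge\tfrac12\sum_i\expect{Y_i}\ge q\beta/2 .
\]

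Making this symmetrization fully rigorous is the step I expect to be the hardest, because the algorithm's inner-level-$i$ behavior interacts with later levels. If it proves awkward, the fallback is to prove the weaker bound $\Omega(q\beta)$ with a worse constant, for instance by restricting attention to every other level.
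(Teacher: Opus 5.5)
Your argument is correct. The symmetrization you flag as the hard step is already rigorous, so no fallback is needed.

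Let $U_i$ be the final sets owned at the end of round $i$ whose leaf lies below $u_i$. Projecting them gives an online inner algorithm on a fresh sample, so $\mathbb{E}|U_i|\ge\beta$, and $|U_i|\ge Y_i$. The set $U_i$ is determined by $u_1,\dots,u_i$ and the first $i$ inner samples, whereas $u_{i+1}$ is a fair coin over the children of $u_i$, independent of all of these. Hence for $i<q$, the sets of $U_i$ whose leaf lies below the sibling $u_i'$ (call them $D_i$) satisfy $\mathbb{E}|D_i|=\tfrac12\mathbb{E}|U_i|$. This holds whatever the algorithm does later, since you only count sets already in $U_i$; that is why the later-level interaction you worried about does not arise.

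Put $D_q:=U_q$. A final set can belong to $D_i$ only when $i$ is the deepest level at which its root-to-leaf path meets the random path. So the $D_i$ are disjoint, and $\mathbb{E}|\ALG|\ge\sum_i\mathbb{E}|D_i|\ge\tfrac12\sum_i\mathbb{E}|U_i|\ge q\beta/2$.

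Two small corrections:
\begin{itemize}
\item ``Half of the inner cost'' should read ``half of $|U_i|$''. That is a count of final sets, which dominates $Y_i$.
\item Level $q$ must be handled separately, as above, because $u_q$ has no child and so there is no sibling $u_q'$.
\end{itemize}

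The paper uses the same coin-flip fact but organizes the accounting around new purchases. With $\mathcal A_i$ equal to your $U_i$ and $Z_i$ the sets bought in round $i$, it shows $\mathbb{E}|\mathcal A_i|\le\tfrac12\mathbb{E}|\mathcal A_{i-1}|+\mathbb{E}|Z_i|$: every old set in $\mathcal A_i$ was in $\mathcal A_{i-1}$ and survives with probability $1/2$. It then telescopes. Your route instead charges each set exactly once, to the level where it falls off the random path. That makes the disjointness explicit and avoids the telescoping, while the paper's version is a one-line recurrence. The two are dual bookkeepings of the same quantity and give identical bounds; before rounding down, both actually yield $(q+1)\beta/2$.
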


\begin{proof}
	Let $T^{\text{out}}$ be a complete binary tree with $q$ levels.
	Independently sample an inner request sequence at each vertex
	$v^{\text{out}} \in T^{\text{out}}$, and denote this copy by $\mathcal{I}(v^{\text{out}})$.
	Index its $n^{\text{in}}$ request occurrences by $e^{\text{in}}\in[n^{\text{in}}]$, in arrival
	order. 
	
	Elements of the \emph{final} set system are the union of the elements of all these copies, specified as $(e^{\text{in}}, v^{\text{out}})$ where $e^{\text{in}} \in [n^{\text{in}}]$ is an element in the inner instance and $v^{\text{out}} \in V(T^{\text{out}})$ is a vertex in the outer tree. A set $S$ in the final instance is specified by a tuple
	\[
	S=(S^{\text{in}}_1,\ldots,S^{\text{in}}_q,\ell^{\text{out}}),
	\qquad S^{\text{in}}_i\in[m^\text{in}],\quad
	\ell^{\text{out}}\in\text{Leaves}(T^{\text{out}}).
	\]
	This set is defined to cover precisely the union over $i \in [q]$ of the elements of $S^{\text{in}}_i$ in the copy of $\mathcal{I}$ at level $i$ of the path
	$P(\ell^{\text{out}})$.
	
\input{product.tex}

	The final request sequence is the following. Independently of the inner samples, choose a uniform root to leaf path
	$\overline P = (\overline v^{\text{out}}_1,\ldots,\overline v^{\text{out}}_q)$.
	In round $i$, reveal the $n$ requests of
	$\mathcal{I}(\overline v^{\text{out}}_i)$ in order, with incidences
	\begin{align*}
		(e^{\text{in}},\overline v^{\text{out}}_i)\in (S^{\text{in}}_1,\ldots,S^{\text{in}}_q,\ell^{\text{out}})
		\quad\Longleftrightarrow\quad
		\overline v^{\text{out}}_i\in P(\ell^{\text{out}})
		\ \text{ and }\ e^{\text{in}}\in S^{\text{in}}_i.
	\end{align*}
	
	The final instance $\mathcal{I}^\text{fin}$ has $m^{\text{fin}}=2^{q-1}(m^\text{in})^q$ final sets, and $n^\text{fin} = q n^\text{in}$ elements. (We do not include the elements of $\mathcal{I}(v^{\text{out}})$ for any $v^{\text{out}} \not \in \overline P$ because these are not actually requested of the algorithm.)

	\subsection{The Analysis}
	\paragraph{Upper bounding $\OPT$.}

	Let $\OPT_i = \{S_1^\text{in}(i), \ldots, S_L^\text{in}(i)\}$ be the optimal cover for the instance in round $i$. Then $\OPT^\text{fin}$ can pick the sets $(S_j^\text{in}(1), \ldots, S_j^\text{in}(q), \overline v^{\text{out}}_q)$ for $j \in [L]$, and hence $|\OPT| \leq L$. (In words, $\OPT$ always picks the correct path in the outer tree, and each of its final sets selects one of the $\leq L$ inner sets from each of the instances on the correct path.)
	
	\paragraph{Lower bounding $\ALG$.} 
	Let $\mathcal A_i$ be
	the sets $(S^{\text{in}}_1,\ldots,S^{\text{in}}_q,\ell^{\text{out}})$ that the algorithm owns at the end of round $i$ for which
	$\overline v^{\text{out}}_i\in P(\ell^{\text{out}})$ (i.e. sets that meaningfully contribute to covering the instance $\mathcal{I}(\overline v^{\text{out}}_i)$ revealed in round $i$). Let $Z_i$ be the sets bought in round $i$. For convenience, define $\mathcal A_0=\emptyset$. Projected onto the inner instance $\mathcal{I}(\overline v^{\text{out}}_i)$, the sets in $\mathcal{A}_i$ are an online algorithm for this instance. Thus, by assumption on $\mathcal{I}$, we have $\expect{\abs{\mathcal A_i}} \geq \beta$. On the other hand, since the choice of whether $\overline v^{\text{out}}_i$ is the left or right child of $\overline v^{\text{out}}_{i-1}$ is independent of $\mathcal{I}(\overline v^{\text{out}}_1), \ldots, \mathcal{I}(\overline v^{\text{out}}_{i-1})$ and of the prefix of the outer path $\overline v^{\text{out}}_{1}, \ldots, \overline v^{\text{out}}_{i-1}$, we have that every set of $\mathcal A_{i-1}$ is in $\mathcal A_i$ with probability $1/2$. Thus
	\[
	\expect{\abs{\mathcal A_i}}
	\le\tfrac12\expect{\abs{\mathcal A_{i-1}}}+\expect{\abs{Z_i}},
	\]
	This also holds for $i=1$, since $\mathcal A_1\subseteq Z_1$.
	Summing and using the disjointness of the $Z_i$ gives
	\[
	\expect{\abs{\ALG}}
	\ge\sum_{i=1}^q\expect{\abs{Z_i}}
	\ge\sum_{i=1}^q\left(\expect{\abs{\mathcal A_i}}
	-\tfrac12\expect{\abs{\mathcal A_{i-1}}}\right)
	\ge\tfrac12\sum_{i=1}^q\expect{\abs{\mathcal A_i}}
	\ge q\beta/2. \qedhere
	\]
\end{proof}

%% file: product.tex
\begin{figure}[!ht]
\centering
\begin{tikzpicture}[x=1cm,y=1cm,font=\small,
  copy/.style={draw=ink!55,fill=white,rounded corners=3pt,
    minimum width=32mm,minimum height=18.5mm,inner sep=0pt},
  off path/.style={copy,pattern=north east lines,pattern color=ink!22},
  inner set/.style={draw=ink!45,fill=ink!3,rounded corners=1pt,
    minimum width=4.6mm,minimum height=5mm,inner sep=0pt,
    font=\footnotesize},
  selected set/.style={inner set,draw=pathcolor,fill=pathcolor,
    text=white,line width=.8pt}]
  % Display the inner sets along the chosen path. The fourth argument
  % selects a rectangle; the fifth gives the level.
  \newcommand{\productcopy}[5]{%
    \node[copy] (#1) at (#2) {};
    \begin{scope}[shift={(#1.center)}]
      \node at (0,.56) {$\mathcal I(#3)$};
      \foreach \j/\x/\lab in {1/-1.10/1,2/-.48/2,3/.14/3,4/1.12/m} {
        \ifnum\j=#4\relax
          \node[selected set] at (\x,-.06) {$\lab$};
          \node[text=pathcolor] at (\x,-.61) {$S^{\text{in}}_{#5}$};
        \else
          \node[inner set] at (\x,-.06) {$\lab$};
        \fi
      }
      \node[font=\footnotesize] at (.64,-.06) {$\cdots$};
    \end{scope}
  }

  \productcopy{r}{0,0}{v^{\text{out}}_1}{1}{1}
  \productcopy{l}{-3.65,-2.8}{v^{\text{out}}_2}{3}{2}
  \node[off path] (s) at (3.65,-2.8) {};
  \node[off path] (ll) at (-5.475,-5.6) {};
  \productcopy{lr}{-1.825,-5.6}{v^{\text{out}}_3}{2}{3}
  \node[off path] (sl) at (1.825,-5.6) {};
  \node[off path] (sr) at (5.475,-5.6) {};

  \foreach \u/\v in {r/l,r/s,l/ll,l/lr,s/sl,s/sr}
    \draw[treeedge] (\u.south)--(\v.north);
  \draw[draw=pathcolor,line width=1.3pt]
    (r.south)--(l.north) (l.south)--(lr.north);
  \node[below=4mm of lr,text=pathcolor]
    {$\ell^{\text{out}}=v^{\text{out}}_3$};
  \node at (5.1,0) {outer tree $T^{\text{out}}$};
\end{tikzpicture}
\caption{Representation of the final instance, with a single set $S=(1,3,2,\ell^{\text{out}})$ of the outer instance highlighted.}
\label{fig:tree-product}
\end{figure}

%% file: general_parameters.tex
\section{Proof of \cref{thm:general-parameters}}

We prove the theorem by applying \cref{lem:tree-product} with the instance from \cref{lem:inner}. For a fixed integer $k$, the inner instance $\mathcal{I}_k$ has parameters 
\[
n^\text{in} \leq k4^k, \qquad m^\text{in} = 512 k^2, \qquad |\OPT^\text{in}| \leq 2k, \qquad \expect{\abs{\ALG^\text{in}}} \geq \frac{k(k+1)}{4}.
\]
Hence the final amplified instance has parameters
\[
n^\text{fin} \leq qk4^k , \qquad m^\text{fin} = 2^{10q-1} k^{2q}, \qquad |\OPT^\text{fin}| \leq 2k, \qquad \expect{\abs{\ALG^\text{fin}}} \geq \frac{qk(k+1)}{8}.
\]
By padding, we can use any parameters $n$ and $m$ so long as $n \geq n^\text{fin}$ and $m \geq m^\text{fin}$.

By assumption, $\log^2 n \leq m \leq 2^{n}$. We set the parameters $k$ and $q$ by cases.

\emph{Case 1: $m\le2^{\sqrt n}$.}
Choose
\[
k= \left \lfloor\frac {\log n}{32} \right \rfloor,
\qquad q=\left\lfloor\frac{ \log m} {10+2\log k}\right\rfloor \leq \frac{\sqrt{n}}{2 \log k}.
\]
For sufficiently large $k$ and $q$, this yields
\[
n \geq n^{1/2}  2^{16 k} = n^{1/2} 4^k 2^{14k}  \geq (2q \log k) 4^k 2^{14k} = \frac{2^{14k+1}\cdot \log k }{k} \cdot qk4^k \geq n^\text{fin}.
\]
On the other hand,
\[
m \geq 2^{q (10 + 2\log k)} \geq 2^{10q-1}k^{2q} = m^{\text{fin}}.
\]
The competitive ratio is at least 
\[
\frac{q(k+1)}{16} \geq \frac{\log m \log n}{4096 \log \log n}.
\]

\emph{Case 2: $m>2^{\sqrt n}$.}
Choose
\[
k=2,\qquad q=\left\lfloor\frac{\log m}{32}\right\rfloor.
\]
For sufficiently large $n$, $q\ge(\log m)/64\ge1$, giving
\[
n\ge\log m\ge32q\ge n^\text{fin},
\qquad
m\ge2^{32q}\ge2^{12q-1}=m^\text{fin}.
\]
Since $m>2^{\sqrt n}$, we have $\log\log m>\tfrac12\log n$.
Thus the competitive ratio is at least
\[
\frac{q(k+1)}{16}
\ge\frac{\log m}{512}
\ge\frac{\log m\,\log n}{1024\log\log m}.
\]

In conclusion, for any sufficiently large $\log^2 n \leq m \leq 2^n$, there exists a distribution over instances with $m$ sets and $n$ elements for which any deterministic algorithm has competitive ratio \[ \frac{1}{4096} \cdot \frac{\log m \log n}{ \log \log n + \log \log m}.\] 
The theorem follows by Yao's principle.